\title{Isomorphism testing of rooted trees in linear time}
\author{Anna Lindeberg}
\date{}
\def\thmhead@plain#1#2#3{%
  \thmname{#1}\thmnumber{\@ifnotempty{#1}{ }\@upn{#2}}%
  \thmnote{ {\the\thm@notefont#3}}}
\let\thmhead\thmhead@plain
\newtheorem{theorem}{Theorem}[section] % of the form sectionNo.number
\newtheorem{lemma}[theorem]{Lemma} % same counter as theorem
\theoremstyle{definition}
\newcommand{\alg}[1]{\textup{\texttt{#1}}}
\newcommand{\levelCDL}{\mathcal{L}}
\DeclareMathOperator{\lvl}{level}
\DeclareMathOperator{\parent}{parent}
\DeclareMathOperator{\child}{child}
\DeclareMathOperator{\cdl}{\ell}
\DeclareMathOperator{\lexleq}{\leq_{lex}}
\DeclareMathOperator{\height}{height}
\tikzset{vertex/.style={minimum size=3pt, inner sep = 0pt, fill,draw,circle}}
\begin{document}
\maketitle

\begin{abstract}
  The AHU-algorithm solves the computationally difficult graph isomorphism problem for rooted trees, and does so with a linear time complexity. Although the AHU-algorithm has remained state of the art for almost 50 years, it has been criticized for being unclearly presented, and no complete proof of correctness has been given. In this text, that gap is filled: we formalize the algorithm's main point of assigning and compressing labels to provide a characterization of isomorphic rooted trees, and then proceed with proving the correctness and optimal runtime of the AHU-algorithm.
\end{abstract}

\section{Introduction}
There is a rich collection of algebraic and combinatorial objects that consists of an underlying set, and some structure on that set. Often, the structural properties are what excites mathematicians, rather than the set. To capture the concept of two objects \say{having the same structure}, one often uses some kind of \emph{isomorphism}: a map that preserves structure, but not specific elements. Unfortunately, isomorphism are notoriously tricky to establish, in the sense that often the only option is to provide an explicit bijective map and prove that it indeed is an isomorphism.

The situation is not much better in an algorithmic setting. The well-studied decision problem of determining if two undirected graphs are isomorphic has neither been proved to be NP-complete nor to be polynomial time-solvable \cite{Schöning1988}. The most efficient algorithms for the graph isomorphism problem are quasi-polynomial in a worst-case scenario, but can be implemented to work rather well in practice, see \cite{McKay2014} for a good overview of the numerous algorithms that exist.

Despite these difficulties, in 1974 an algorithm for testing wether or not two rooted trees are isomorphic was given by Aho, Hopcroft, and Ullman \cite{Aho1974}. The algorithm, since then named the AHU-algorithm, runs in time proportional to the number of vertices of the input and has remained state of the art for almost fifty years. Still, it has been called \say{Utterly opaque. Even on second or third reading.} in \cite[p.255]{Campbell1991}. In other words: even if it is optimal in terms of efficiency, much of its details are left for the reader to figure out on their own.

Although no other linear time algorithm for isomorphism checking of rooted trees exist, people have tried to simplify and clarify the AHU-algorithm over the years. In \cite{Campbell1991}, Campbell and Radford does an excellent job at guiding the reader through the steps that lead to the full AHU-algorithm, developing intuition about why certain simplifications cannot be made. In \cite[Sec.~4.1.2]{Valiente2002}, a simpler and intuitively sound algorithm with a $O(n^2)$ time complexity is given. Moreover, the first new contribution to the AHU-algorithm since its introduction can be found in the recent preprint of Ingels \cite{Ingels2023}, where the somewhat tricky step of carefully sorting lists of tuples is replaced with multiplication of prime number (at the additional cost of generating primes). Still, this adapted AHU-algorithm has the same time complexity as the original.

The purpose of this text is twofold: first, we will explicitly prove the correctness and runtime results hinted at both in \cite{Aho1974} and \cite{Campbell1991} and thus, so to speak, fill the gap between the consensus of the correctness of the AHU-algorithm and what has been put in print. Secondly, the formalization of the mathematics behind the algorithm will, hopefully, complement \cite{Campbell1991} with a phrasing of the AHU-algorithm that lies rather close to how it may be implemented. In the experience of the author, formalization can, at times, also lead to intuition.

This paper is structured as follows: after a brief background on graphs, trees and lexicographical orders in Section~2, we prove a characterization of isomorphic rooted trees in Section~3.1, and establish the correctness and runtime of the AHU-algorithm in Section~3.2. A few closing remarks are given in Section~4.

\section{Preliminaries}
In this text, familiarity with both graph theory and rudimentary complexity theory is assumed, but we nonetheless introduce the graph theoretic terminology that will be used to make sure that no notation-related confusion arises.

\paragraph*{Graphs} We consider finite and undirected graphs $G=(V,E)$, where $V(G):=V$ is a nonempty set of \emph{vertices} of $G$, and $E(G):=E$ are the \emph{edges} of $G$. If $\{v,u\}\in E$, then the vertices $v$ and $u$ are said to be \emph{adjacent}. The \emph{degree} of a vertex $v$ is the number of vertices adjacent to $v$.
A \emph{path} in a graph $G=(V,E)$ is a non-repeating and ordered sequence $v_1v_2\ldots v_k$ of vertices in $V$ such that $\{v_i,v_{i+1}\}\in E$ for $i=1,2,\ldots,k-1$. In particular, the path $v_1v_2\ldots v_k$ is a \emph{path from $v_1$ to $v_k$} or, for short, a \emph{$v_1v_k$-path}. The \emph{length} of the path $v_1v_2\ldots v_k$ is the number of edges in it, that is, the number $k-1$. The \emph{distance} between two vertices $v$ and $u$ of a graph $G$, is the shortest possible length of a $uv$-path in $G$.

\paragraph*{Rooted trees}   A \emph{tree} is a graph $T=(V,E)$ in which there exist precisely one $uv$-path for each pair of distinct vertices $u,v\in V$. A \emph{rooted tree} is, formally speaking, an ordered pair $(T,r)$, where $T$ is a tree and $r\in V(T)$ is a distinguished vertex called the \emph{root} (of $T$). We will at times just state that $T$ is a rooted tree, leaving the presence of the root implicit. 

The cardinality of $V$ in a (rooted) tree $T=(V,E)$ is called the \emph{order} of $T$. It is a well-known fact that if $T$ is a tree with $n$ vertices, then it has precisely $n-1$ edges. In an algorithmic setting, this means that the number of edges is always proportional to the number of vertices, and any linear time complexity $O(|V|+|E|)$ reduces to $O(|V|)=O(n)$.

Let $T=(V,E)$ be a rooted tree with root $r$. If $v\in V$ has degree one or zero, then $v$ is said to be a \emph{leaf}. If $v$ is not a leaf, then it is called an \emph{inner vertex}. In particular, the so-called \emph{trivial tree} $T=(V,\emptyset)$ where $|V|=1$ has only one leaf and no inner vertices. Note that by definition of trees, there is a unique path $v_0v_1\ldots v_k$ for $v_0=r$ and any vertex $v=v_k$. In that case, $v_k$ is said to be a \emph{child} of $v_{k-1}$, whereas $v_{k-1}$ is the \emph{parent} of $v_k$. By the uniqueness of the $rv$-path in $T$, it is clear that a vertex $v$ may have any number of children, but the parent of $v\neq r$ is always a single vertex denoted $\parent(v)$. The (possibly empty) set of children of a vertex $v$ is denoted with $\child(v)$. Moreover, a vertex $v$ is an \emph{descendant} of a vertex $u$ if $u$ lies on the unique $rv$-path in $T$.

The \emph{height} of a rooted tree $(T,r)$, denoted $\height(T)$, is the maximum length of a $rv$-path in $T$, where $v$ is any vertex of $T$. If $d$ denote the distance from the root $r$ to some vertex $v$ of $T$, then we define the \emph{level} of $v$ as $\lvl(v):=\height(T)-d$. We also say that \emph{$v$ is at level $\lvl(v)$}. In particular, the root is the only vertex whose level is $\height(T)$, and every vertex at level 0 will be a leaf. Note, however, that there may be leafs in other levels as well. Lastly, for some rooted tree $T$, we put $T_i\subset V(T)$ to be the set $T_i=\{v\in V(T)\,:\,\lvl(v)=i\}.$

At times, when context is not sufficiently precise, we specify the underlying rooted tree $T$ with an index so that $\parent_T(v)$, $\child_T(v)$, and $\lvl_T(v)$ denotes the same as $\parent(v)$, $\child(v)$, respectively $\lvl(v)$.

Two rooted trees $(T,r)$ and $(T',r')$ are \emph{isomorphic} if there exists a bijective map $\varphi:V(T)\to V(T')$ such that 
\begin{enumerate}[label = (\roman*)]
  \item for all $u,v\in V(T)$ we have $\{u,v\}\in E(T)$ if and only if $\{\varphi(u),\varphi(v)\}\in E(T')$, and such that 
  \item $\varphi(r)=r'$. 
\end{enumerate}

\paragraph*{Sets and the lexicographical order}
We let $[n]$ denote the set $\{0,1,\ldots,n\}$. A \emph{partition} of a set $X$ is a family of disjoint subsets $\{X_i\}_{i=1}^k$ of $X$ such that $\cup_{i=1}^k X_i=X$.

When $S$ is a set, $S^*$ denotes the set of finite tuples with components in $S$. Any linear order $\leq$ of $S$ can then be extended to the \emph{lexicographical order} $\lexleq$ on $S^*$ by letting $(s_1,s_2,\ldots,s_k)\lexleq (t_1,t_2,\ldots,t_l)$ if and only if (a) $k\leq l$ and $s_i=t_i$ for all $1\leq i\leq k$ or (b) there is some index $j$ such that $s_i=t_i$ for all $1\leq i< j$ and $s_j<t_j$.  In particular, $S^*$ contains the empty tuple $()$, which is smaller than any other tuple of $S^*$. When $S=[n]$ with its canonical ordering, we for example have that
\[()\lexleq(0,1)\lexleq(0,1,1,1)\lexleq(0,1,2,0)\lexleq(3,1).\]

The authors of the AHU-algorithm show how a list containing $n$ tuples in $[k]^*$ can be rearranged in a lexicographical order efficiently, using an algorithm we call \alg{LEXsort}, originally given in \cite[Alg.~3.2]{Aho1974}. This algorithm is a variation of the more well-known radix-sort (see e.g. \cite[Sec.~8.3]{Cormen+2009} for a modern treatment), where the tuples in the list may have different numbers of components. 

\begin{theorem}[{\cite[Thm.~3.2]{Aho1974}}]\label{thm:LEXsort}
  The algorithm \alg{LEXsort} can be implemented in $O(n+l_{\text{tot}})$ time for an input list with $n$ tuples in $[k]^*$, where $k=O(n)$ and $l_{\text{tot}}$ is the sum over the lengths of all tuples.
\end{theorem}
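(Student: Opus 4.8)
The plan is to read \alg{LEXsort} as a least-significant-component radix sort on the input tuples, in which a tuple that has run out of components is handled as a prefix and therefore kept ahead of the longer tuples it agrees with, matching the definition of $\lexleq$. Its correctness is the classical radix invariant --- after the pass handling position $p$, the tuples of length at least $p$ are sorted with respect to $\lexleq$ restricted to positions $p, p{+}1, \dots$ --- together with the prefix rule, and this is exactly the argument of \cite[Thm.~3.2]{Aho1974}; so I would concentrate on the timing claim, i.e.\ on exhibiting an implementation in which no single step is wasteful. Throughout, write $l_j$ for the length of the $j$-th input tuple and $l_{\max} = \max_j l_j$, so that $l_{\text{tot}} = \sum_j l_j$ and $l_{\max} \le l_{\text{tot}}$. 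I would split the analysis into a preprocessing phase and the subsequent sequence of radix passes, assuming (as is standard) that successive components of a tuple can be read in $O(1)$ amortised time during the passes --- e.g.\ each active tuple can carry a pointer that walks from its last component towards its first as $p$ decreases, at total cost $O(l_{\text{tot}})$.

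For the preprocessing phase, first scan the input, record each $l_j$, distribute the tuples by length into lists $B_0, B_1, \dots, B_{l_{\max}}$ (one pass, $O(n + l_{\max})$ including allocation of the buckets), and form the list $P$ containing, for every tuple and every one of its component positions $p$, the pair $(p, a)$ where $a$ is that component; $P$ has exactly $l_{\text{tot}}$ entries, so building it costs $O(n + l_{\text{tot}})$. Now stably counting-sort $P$ first by its symbol coordinate --- the key lies in $[k]$ and $k = O(n)$, so this is $O(l_{\text{tot}} + k) = O(n + l_{\text{tot}})$ --- and then by its position coordinate --- the key is at most $l_{\max} \le l_{\text{tot}}$, so this is $O(l_{\text{tot}})$. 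The resulting list is grouped by position, each group sorted by symbol, so one further linear scan extracts, for every position $p$, the list $N_p$ of \emph{distinct} symbols occurring at position $p$, in increasing order. Since each member of $N_p$ accounts for at least one entry of $P$ and these accountings are disjoint over $p$, we have $\sum_p |N_p| \le |P| = l_{\text{tot}}$. The whole phase is thus $O(n + l_{\text{tot}})$.

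For the radix passes, keep one global array of $k+1$ bucket pointers and an ordered working list $Q$, initially empty, and for $p = l_{\max}, l_{\max}-1, \dots, 1$ do the following: prepend $B_p$ (the tuples of length exactly $p$) to $Q$; distribute every tuple currently in $Q$ into the bucket indexed by its $p$-th component; and rebuild $Q$ by concatenating those buckets in the order prescribed by $N_p$, emptying each bucket as it is read. Writing $n_p = |\{j : l_j \ge p\}|$ for the length of $Q$ during pass $p$, the pass touches exactly the $|N_p|$ buckets named in $N_p$, so it runs in $O(n_p + |N_p|)$ time --- the essential point, and the one place a careless implementation goes wrong, is that one must \emph{never} scan or re-initialise the whole bucket array (of length $k+1$), which would cost $O(k\,l_{\max})$ overall and is unaffordable. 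Summing over the passes and exchanging the order of summation, $\sum_{p=1}^{l_{\max}} n_p = \sum_{p=1}^{l_{\max}} |\{j : l_j \ge p\}| = \sum_j l_j = l_{\text{tot}}$, and $\sum_p |N_p| \le l_{\text{tot}}$ as above, so all the passes together cost $O(l_{\text{tot}})$. Finally, prepending the empty tuples (length $0$, each equal to the $\lexleq$-minimal $()$) and reading out $Q$ adds $O(n)$, and combining with the preprocessing cost yields the advertised $O(n + l_{\text{tot}})$. The main obstacle is precisely this amortised bookkeeping: the counting identity $\sum_p |\{j : l_j \ge p\}| = l_{\text{tot}}$ must be paired with an implementation disciplined enough never to spend time on a bucket that is not actually used at the position being processed.
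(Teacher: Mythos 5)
The paper gives no proof of this statement at all---it is quoted verbatim from Aho--Hopcroft--Ullman (their Theorem~3.2), so there is no internal argument to compare against. Your reconstruction is correct and is essentially the classical AHU argument for that theorem: presorting the (position, symbol) pairs to extract the sorted lists $N_p$ of symbols actually occurring at each position, inserting the tuples of length exactly $p$ at the head of the queue before the stable pass on position $p$ so that the prefix rule of $\lexleq$ is respected, and the amortized accounting $\sum_p n_p = l_{\text{tot}}$ and $\sum_p |N_p| \le l_{\text{tot}}$ that avoids ever touching the full array of $k+1$ buckets.
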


\section{The AHU-algorithm}
The AHU-algorithm is a bottom-up procedure, working level by level in the two input trees simultaneously. In essence, each vertex in the current level is first assigned a tuple label that encodes a condensed history of its descendants. The label of a vertex is recursively constructed from the labels of its children. In particular, the labels of two vertices $v$ and $u$ in the same level are equal if and only if the tree containing the descendants of $v$ is isomorphic to the tree containing the descendants of $u$. These labels are then put into one sorted list per tree, and the lists are compared. If not equal, then the trees cannot be isomorphic and the algorithm terminates.
Otherwise, the labels are shortened, and the algorithm may proceed with the next iteration. The trees are concluded to be isomorphic if and only if the roots obtain the same label. Note that the step where the labels are shortened is not strictly necessary for the correctness of the algorithm but it is, however, crucial to obtain a linear run time. Before we see the full AHU-algorithm, we now provide a characterization of isomorphic rooted trees.

\subsection{Characterization of isomorphic rooted trees}

Let $S=\{s_1,s_2,\ldots,s_k\}$ be a finite subset of $[n]^*$ of cardinality $k$ such that $s_1\lexleq \ldots\lexleq s_k$. The \emph{label-compression} of $S$ is the map $c: S\to [k]$ defined by $c(s_1):=0$ and $c(s_i):=c(s_{i-1})+1$ for $i=2,3,\ldots,k$. In other words, $c$ assigns consecutive integers to consecutive (w.r.t. $\lexleq$) tuples of $S$, starting at 0.

Now, the \emph{compressed descendant labeling} $\cdl:V(T)\to[n]^*$ of a rooted tree $T$ of order $n$ is defined recursively in terms of the levels of $T$. If $v$ is a leaf of $T$ (in any level), then $\cdl(v):=()$. As level zero of a tree contains only leafs, we may assume $\cdl(v)$ is defined for every vertex $v$ in level $i$, before defining $\cdl(u)$ of a vertex $u$ in level $i+1$ of $T$. Thus, assume $u\in T_{i+1}$ is a inner vertex with children $u_1$, $u_2$, ..., $u_k$ such that \[\cdl(u_1)\lexleq\cdl(u_2)\lexleq\ldots\lexleq\cdl(u_k),\] and let $c_i$ be the label-compression of the set $S_i=\{\cdl(v)\,:\, v\in T_i\}$.
With that, define $\cdl(u):=(c_i(\cdl(u_1)),c_i(\cdl(u_2)),\ldots,c_i(\cdl(u_k))).$ Note that $c_i$ maps tuples to integers in $[m]$, where $m=|T_i|$, and that the $\cdl(u)$ for $u\in T_i$ is a tuple whose components all lie in the image of $c_i$. Thus $\cdl(u)$ will itself be a tuple in $[m]^*$, and the label-compression of $S_{i+1}$ is well-defined. 

\begin{figure}
  \begin{center}
  \begin{tikzpicture}[level distance=10mm, level 1/.style={sibling distance=40mm},
    level 2/.style={sibling distance=15mm},
    vertex/.style={minimum size=5pt, inner sep = 0pt, fill,draw,circle},
    leaf/.style={label={[name=#1]below:$#1$}},
    every fit/.style={rectangle,fill=blue!15,inner sep=4pt}]
% level markers, first create coordinates
\node (L3left) at (-27.5mm,0) {};
\node (L3right) at (38mm,0) {};
\node (L2left) at ($(L3left)-(0,10mm)$) {};
\node (L2right) at ($(L3right)-(0,10mm)$) {};
\node (L1left) at ($(L3left)-(0,20mm)$) {};
\node (L1right) at ($(L3right)-(0,20mm)$) {};
\node (L0left) at ($(L3left)-(0,30mm)$) {};
\node (L0right) at ($(L3right)-(0,30mm)$) {};
% then the level "boxes"
\node[fit= (L0left) (L0right), label= left: level 0] {};
\node[fit= (L1left) (L1right), label= left: level 1] {};
\node[fit= (L2left) (L2right), label= left: level 2] {};
\node[fit= (L3left) (L3right), label= left: level 3] {};
%
% Here comes the tree (names: lxy or rxy where l/r is left/right, x is the level, y node no from left)
\tikzset{nodes = vertex}
\node[vertex] (root) {}
child {node[vertex] (l21) {}
  child {node (l11) {}}
  child {node[vertex] (l12) {}
    child {node (l01) {}}
    child {node (l02) {}}}}
child {node[vertex] (r21) {}
  child {node (r11) {}}
  child {node[vertex] (r12) {}
    child {node (r01) {}}
    child {node (r02) {}}}
  child {node (r13) {}}
};
% And then labels
% level 0
\node[label = {right:$()$}] at (l01) {};
\node[label = {right:$()$}] at (l02) {};
\node[label = {right:$()$}] at (r01) {};
\node[label = {right:$()$}] at (r02) {};
% level 1
\node[label = {right:$()$}] at (l11) {};
\node[label = {right:$(0,0)$}] at (l12) {};
\node[label = {[label distance=1mm]right:$()$}] at (r11) {};
\node[label = {right:$(0,0)$}] at (r12) {};
\node[label = {right:$()$}] at (r13) {};
% level 2
\node[label = {[label distance=2mm]right:$(0,1)$}] at (l21) {};
\node[label = {[label distance=1mm]right:$(0,0,1)$}] at (r21) {};
% level 3
\node[label = {[label distance=2mm]right:$(0,1)$}] at (root) {};
\end{tikzpicture}
\end{center}
\caption{A rooted tree $T$, where the compressed descendant label $\cdl(v)$ is given alongside each vertex.}
\end{figure}

An example is in place. In the rooted tree $T$ given in Figure~1, each leaf $v$ satisfy $\cdl(v)=()$. The label-compression $c_0$ of the set of labels assigned to vertices in $T_0$ just maps $()\mapsto0$, so that the two inner vertices of $T_1$ gets the label $(c(()),c(()))=(0,0)$. There are two vertices $u$ and $v$ in level 2 of $T$, where $\cdl(u)=(0,1)$ and $\cdl(v)=(0,0,1)$. Since $(0,0,1)\lexleq(0,1)$, the label compression $c_2$ of $\{\cdl(v),\cdl(u)\}$ maps $c_2((0,0,1))=0$ and $c_2((0,1))=1$. Hence, $\cdl(r)=(0,1)$. Note that the equality $\cdl(u)=(0,1)=\cdl(r)$ is, more or less, a coincidence. The label $(0,1)$ should only be interpreted as: this vertex has two children, and these two children have different descendant histories.

Intuitively speaking, two rooted trees are isomorphic if and only if they are isomorphic "level by level". This is the essential idea of the AHU-algorithm. The formal statement depends on the compressed descendant labeling of the trees in question, but we need one last piece of additional notation: for a rooted tree $T=(V,E)$ with compressed descendant labeling $\cdl$, define $\levelCDL(W)$ as the \emph{multiset} of compressed descendant labels $\cdl(v)$ of all vertices $v\in W$, where $W$ is any subset of $V$. Moreover, we say that a pair of rooted trees $T$ and $T'$ \emph{satisfies the level-label condition} if $\levelCDL(T_i)=\levelCDL(T'_i)$ for $i=0,1,\ldots, h$, where $h=\max\{\height(T),\height(T')\}$. Note that the level-label condition implies that $\height(T)=\height(T')$. Moreover, the label-compressions defined above are formally defined on \emph{sets} of labels of vertices in $T_i$, but for simplicity we sometimes state that $c$ is the label-compression of $\levelCDL(T_i)$ for some level $i$; formally speaking, $c$ is then the label-compression of the underlying set of the multiset $\levelCDL(T_i)$. We begin with a lemma of technical nature.

\begin{lemma}\label{lem:level-label implication}
  If $(T,r)$ and $(T',r')$ are rooted trees that satisfy the level-label condition with common height $h$, then there exist a collection of bijective maps $\varphi_i:T_i\to T'_i$ for $i\in[h]$ such that $\varphi_{i+1}(\parent_T(v))=\parent_{T'}(\varphi_i(v))$ for all $v\in T_i$ and all $i\in[h-1]$.
\end{lemma}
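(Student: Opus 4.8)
The plan is to build the maps $\varphi_h, \varphi_{h-1}, \ldots, \varphi_0$ by downward induction on the level, after strengthening the claim so that each $\varphi_i$ additionally \emph{preserves labels}, i.e.\ $\cdl(\varphi_i(v)) = \cdl(v)$ for all $v \in T_i$. This extra hypothesis is what powers the induction, and it costs nothing at the top level: level $h$ of each tree consists of its root alone, so the level-label condition at level $h$ forces $\cdl(r) = \cdl(r')$, and we let $\varphi_h(r) := r'$, the unique bijection $T_h \to T'_h$.

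The engine of the inductive step is a reformulation of the equality $\cdl(u) = \cdl(u')$ for vertices $u \in T_{i+1}$ and $u' \in T'_{i+1}$. Because $\levelCDL(T_i) = \levelCDL(T'_i)$, the underlying sets of labels occurring at level $i$ in the two trees coincide, so the label-compression $c_i$ featuring in the definition of $\cdl$ on $T_{i+1}$ is literally the same map as the one used on $T'_{i+1}$, and this map is injective. Writing $\cdl(u)$ as the $\lexleq$-sorted tuple $\bigl(c_i(\cdl(u_1)), \ldots, c_i(\cdl(u_k))\bigr)$ ranging over $u_1, \ldots, u_k \in \child_T(u)$, and similarly for $u'$, injectivity of $c_i$ shows that $\cdl(u) = \cdl(u')$ holds if and only if $\levelCDL(\child_T(u)) = \levelCDL(\child_{T'}(u'))$ as multisets; in that case we may fix a label-preserving bijection $\psi_u : \child_T(u) \to \child_{T'}(u')$.

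For the inductive step itself, assume $\varphi_{i+1} : T_{i+1} \to T'_{i+1}$ is a label-preserving bijection, with $i \leq h-1$. Every vertex of $T_i$ is distinct from the root (the sole vertex at level $h$), hence has a parent, which sits at level $i+1$; conversely every child of a level-$(i+1)$ vertex sits at level $i$. Thus $v \mapsto \parent_T(v)$ maps $T_i$ onto $T_{i+1}$ with fibers exactly the sets $\child_T(u)$, $u \in T_{i+1}$ (some possibly empty), so these sets partition $T_i$; likewise the sets $\child_{T'}(u')$, $u' \in T'_{i+1}$, partition $T'_i$. Since $\varphi_{i+1}$ is label-preserving, $\cdl(u) = \cdl(\varphi_{i+1}(u))$ for each $u$, so by the previous paragraph we may fix a label-preserving bijection $\psi_u : \child_T(u) \to \child_{T'}(\varphi_{i+1}(u))$. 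Define $\varphi_i(v) := \psi_{\parent_T(v)}(v)$ for $v \in T_i$. As $\varphi_{i+1}$ is a bijection matching the block $\child_T(u)$ of $T_i$ with the block $\child_{T'}(\varphi_{i+1}(u))$ of $T'_i$, and $\varphi_i$ restricts to the bijection $\psi_u$ on each such block, $\varphi_i$ is a bijection $T_i \to T'_i$; it is label-preserving because every $\psi_u$ is; and $\parent_{T'}(\varphi_i(v)) = \varphi_{i+1}(\parent_T(v))$ since $\varphi_i(v) = \psi_{\parent_T(v)}(v)$ lies in $\child_{T'}\bigl(\varphi_{i+1}(\parent_T(v))\bigr)$. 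This closes the induction and yields the desired family $\{\varphi_i\}_{i\in[h]}$.

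The step I expect to demand the most care is the reformulation of $\cdl(u) = \cdl(u')$: one must spell out that the level-label condition makes the two level-$i$ label-compressions not merely \say{analogous} but the \emph{same} injective function, so that equality of the compressed, sorted child-tuples of $u$ and $u'$ is equivalent to equality of the multisets of their (uncompressed) child labels. Everything after that is bookkeeping with the parent/child partitions of two consecutive levels.
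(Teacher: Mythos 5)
Your proposal is correct and follows essentially the same route as the paper: a downward induction on the level with the strengthened hypothesis that each $\varphi_i$ preserves compressed descendant labels, using the fact that the level-label condition makes the two level-$i$ label-compressions coincide (and injective) to turn $\cdl(u)=\cdl(u')$ into equality of the multisets of child labels, and then gluing label-preserving bijections block-by-block over the parent/child partition. The paper's proof merely makes your choice of $\psi_u$ explicit by matching maximal subsets of equally-labelled children; otherwise the arguments are the same.
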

\begin{proof}
  Let $(T,r)$ and $(T',r')$ be rooted trees of common height $h$ that satisfy the level-label condition. Throughout this proof, let $\cdl_T$ and $\cdl_{T'}$ denote the compressed descendant labeling of the respective trees. Moreover, let $c_i$ denote the label-compression of the multiset $\levelCDL(T_i)$ and $c'_i$ denote the label-compression of the multiset $\levelCDL(T'_i)$, for $i\in[h]$. 
  
  We will now recursively define maps $\varphi_i:T_i\to T'_i$ for $i=h, h-1,\ldots, 0.$
  Since $T_{h}=\{r\}$ and $T'_{h}=\{r'\}$ we put $\varphi_h(r):=r'$. As $\levelCDL(T_{h})=\levelCDL(T'_{h})$ it is clear that $\cdl_T(r)=\cdl_{T'}(r')=\cdl_{T'}(\varphi_h(r))$, i.e. $\cdl_T(v)=\cdl_{T'}(\varphi_{i}(v))$ for all $v\in T_h$. 

  When defining $\varphi_i$ for $i\in[h-1]$, it is thus fair to assume that $\varphi_{i+1}$ is defined such that $\cdl_T(v)=\cdl_{T'}(\varphi_{i+1}(v))$ for all $v\in T_{i+1}$. Note that since $\levelCDL(T_i)=\levelCDL(T'_i)$, the label-compressions $c_i$ and $c'_i$ satisfy
  \begin{equation*}\label{eq: ci=ci'}
    c_i(x)=c'_i(y)\iff x=y \quad \text{for all } x,y\in\levelCDL(T_i)=\levelCDL(T'_i),
  \end{equation*}
  which means that the condition of $\cdl_T(v)=\cdl_{T'}(\varphi_{i+1}(v))$ can be equivalently written as 
  \begin{equation}\label{eq:level label formulation}
  \levelCDL(\child_T(v))=\levelCDL(\child_{T'}(\varphi_{i+1}(v))),
  \end{equation}
  as the respective labels are equal component-wise. In extension it holds that for each $u\in \child_T(v)$ where $v\in T_{i+1}$, there exist some $u'\in\child_{T'}(\varphi_{i+1}(v))$ such that $\cdl_T(u)=\cdl_{T'}(u')$. 
  
  Moreover, since $\{\child_T(v)\}_{v\in T_{i+1}}$ is a partition of $T_i$, we may define $\varphi_i(u)$ for $u\in T_i$ by the following two properties; $\varphi_i(u):=u'$, where $\cdl_T(u)=\cdl_{T'}(u')$ and $\parent_{T'}(u')=\varphi_{i+1}(\parent_T(u))$. If there are multiple choices of $u'$ (i.e. the multiplicity of $\cdl_{T}(u)$ in $\levelCDL(\child_{T'}(\parent_T(u)))$ is $\geq2$), then $\varphi_i$ may be picked so that it is a bijection of $T_i$ and $T'_i$. This is indeed possible, as $\levelCDL(T_i)$ and $\levelCDL(T'_i)$ are equal as multisets. To be precise, if, for some fixed $v\in T_{i+1}$, $\{u_1,\ldots,u_r\}\subseteq \child_T(v)$ and $\{u'_1,\ldots,u'_r\}\subseteq \child_{T'}(\varphi_{i+1}(v))$ are maximal subsets (in terms of cardinality) such that \[\cdl_T(u_1)=\ldots=\cdl_T(u_r)=\cdl_{T'}(u'_1)=\ldots=\cdl_{T'}(u'_r)\]
  is satisfied, then we define $\varphi_i(u_j):=u'_j$ for $j=1,2,\ldots, r$. We have thus shown that if there is a bijective map $\varphi_{i+1}:T_{i+1}\to T'_{i+1}$ that satisfies $\cdl_T(v)=\cdl_{T'}(\varphi_{i+1}(v))$ for all $v\in T_{i+1}$, then we can define a bijective map $\varphi_i:T_i\to T'_i$ that satisfies $\cdl_T(v)=\cdl_{T'}(\varphi_{i}(v))$ for all $v\in T_i$ and, moreover, $\varphi_{i+1}(\parent_T(v))=\parent_{T'}(\varphi_i(v))$. In conclusion, a family of bijective maps $\{\varphi_i\}_{i\in[h]}$ with the required property always exist.
\end{proof}

We now characterize isomorphic rooted trees as those that satisfy the level-label condition. This theorem highly resembles Observation~8 in \cite{Campbell1991}, although that statement was given without proof.

\begin{theorem}\label{thm:characterization}
  Two rooted trees are isomorphic if and only if they satisfy the level-label condition.
\end{theorem}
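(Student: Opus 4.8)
The plan is to prove the two implications separately. The forward direction — isomorphic trees satisfy the level-label condition — I would obtain by a level-by-level induction, while the backward direction is essentially a matter of assembling the maps produced by Lemma~\ref{lem:level-label implication} into a single isomorphism.

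For the forward direction, suppose $\varphi:V(T)\to V(T')$ is an isomorphism. The first observation is that $\varphi$ preserves levels: mapping the unique $rv$-path in $T$ onto the unique $r'\varphi(v)$-path in $T'$, it preserves distances, hence also the height of the tree and the level of each vertex; consequently $\varphi$ restricts to a bijection $T_i\to T_i'$ for every $i$, and maps $\child_T(v)$ bijectively onto $\child_{T'}(\varphi(v))$. I would then show by induction on $i\in[h]$ that $\cdl_T(v)=\cdl_{T'}(\varphi(v))$ for all $v\in T_i$. The base case $i=0$ is immediate, as every vertex in level $0$ is a leaf and so carries the label $()$. For the inductive step, assume the claim at level $i$. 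Since $\varphi$ then restricts to a label-preserving bijection $T_i\to T_i'$, we get $\levelCDL(T_i)=\levelCDL(T_i')$ as multisets; in particular their underlying sets coincide, so the label-compressions $c_i$ and $c_i'$ are the same map. Now let $u\in T_{i+1}$ be an inner vertex with children ordered so that $\cdl_T(u_1)\lexleq\dots\lexleq\cdl_T(u_k)$. The vertices $\varphi(u_1),\dots,\varphi(u_k)$ are precisely the children of $\varphi(u)$, and by the induction hypothesis $\cdl_{T'}(\varphi(u_j))=\cdl_T(u_j)$, so they too are in lexicographic order; applying $c_i'=c_i$ componentwise gives $\cdl_{T'}(\varphi(u))=\cdl_T(u)$ (leaves in level $i+1$ again being trivial). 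This in particular yields $\levelCDL(T_i)=\levelCDL(T_i')$ for every $i$, which, together with $\height(T)=\height(T')$, is exactly the level-label condition.

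For the backward direction, suppose $T$ and $T'$ satisfy the level-label condition; let $h$ denote their common height and take the bijections $\varphi_i:T_i\to T_i'$, $i\in[h]$, provided by Lemma~\ref{lem:level-label implication}, so that $\varphi_{i+1}(\parent_T(v))=\parent_{T'}(\varphi_i(v))$ for all $v\in T_i$ and $i\in[h-1]$. I would define $\varphi:V(T)\to V(T')$ by $\varphi(v):=\varphi_{\lvl_T(v)}(v)$. Since $\{T_i\}_{i\in[h]}$ and $\{T_i'\}_{i\in[h]}$ partition $V(T)$ and $V(T')$ respectively and each $\varphi_i$ is a bijection onto $T_i'$, the map $\varphi$ is a bijection, and it preserves levels by construction. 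Condition (ii) in the definition of isomorphism holds since $T_h=\{r\}$ and $T_h'=\{r'\}$, so $\varphi(r)=\varphi_h(r)=r'$. For condition (i), recall that the edges of a rooted tree are exactly the pairs $\{v,\parent(v)\}$ and that $\lvl(\parent(v))=\lvl(v)+1$. If $\{u,v\}\in E(T)$ with $v=\parent_T(u)$ and $i:=\lvl_T(u)$, then $\varphi(v)=\varphi_{i+1}(\parent_T(u))=\parent_{T'}(\varphi_i(u))=\parent_{T'}(\varphi(u))$, so $\{\varphi(u),\varphi(v)\}\in E(T')$; conversely, if $\{\varphi(u),\varphi(v)\}\in E(T')$, say $\varphi(v)=\parent_{T'}(\varphi(u))$, then $\varphi$ preserving levels gives $\lvl_T(v)=\lvl_T(u)+1$, and with $i:=\lvl_T(u)$ we obtain $\varphi_{i+1}(\parent_T(u))=\parent_{T'}(\varphi_i(u))=\varphi(v)=\varphi_{i+1}(v)$, so injectivity of $\varphi_{i+1}$ forces $v=\parent_T(u)$ and $\{u,v\}\in E(T)$. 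Hence $\varphi$ is an isomorphism.

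I expect the main delicate point to be the inductive step of the forward direction, specifically the need to order the argument so that the multiset equality $\levelCDL(T_i)=\levelCDL(T_i')$, and hence the equality of the label-compressions $c_i$ and $c_i'$, is available at level $i$ before one can compare the labels assigned at level $i+1$. Once that is arranged, the remainder of the forward direction is routine, and the backward direction follows almost immediately by packaging the maps from Lemma~\ref{lem:level-label implication}.
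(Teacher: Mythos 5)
Your proposal is correct and follows essentially the same route as the paper: a level-by-level induction showing $\cdl_T(v)=\cdl_{T'}(\varphi(v))$ for the forward direction, and assembling the maps from Lemma~\ref{lem:level-label implication} into $\varphi(v):=\varphi_{\lvl_T(v)}(v)$ for the converse. The only cosmetic difference is that you verify the backward edge-preservation via level-preservation and injectivity of $\varphi_{i+1}$, where the paper computes with $\varphi^{-1}$; the arguments are interchangeable.
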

\begin{proof}

  First assume that $\varphi:V(T)\to V(T')$ is an isomorphism of the rooted trees $T$ and $T'$. It suffices to show that the respective compressed descendant labeling $\cdl_T$ and $\cdl_{T'}$ satisfies $\cdl_T(v)=\cdl_{T'}(\varphi(v))$ for all $v\in V(T)$, since $\varphi$ is a bijection such that $\lvl_T(v)=\lvl_{T'}(\varphi(v))$. 
  The proof proceeds by induction on the level of a vertex $v$ in $T$. For the base case, assume $v\in T_0$. Then $v$, as well as $\varphi(v)$, is a leaf and $\cdl_T(v)=()=\cdl_{T'}(\varphi(v))$ follows immediately.
  
  Thus assume there is some level $i\geq0$ such that for all $u\in T_i$ we have that $\cdl_T(u)=\cdl_{T'}(\varphi(u))$. In particular, $\levelCDL(T_i)=\levelCDL(T_i')$. Let $v$ be a vertex in $T_{i+1}$. If $v$ is a leaf, then once again we have that $\cdl_T(v)=()=\cdl_{T'}(\varphi(v))$, as $\varphi(v)$ is a leaf of $T'$. If $v$ is not a leaf, then $v$ has $k\geq1$ children $v_1$, $v_2$, ..., and $v_k$. As $\varphi$ is an isomorphism, the children of $\varphi(v)$ are precisely $\varphi(v_1)$, $\varphi(v_2)$, ..., and $\varphi(v_k)$. By the induction hypothesis, we have
  \begin{equation}\label{eq:label equality}
    \cdl_T(v_j)=\cdl_{T'}(\varphi(v_j))\quad\text{for }j=1,2,\ldots,k.
  \end{equation}
  Moreover, since $\levelCDL(T_i)=\levelCDL(T'_i)$ the respective label-compressions $c_i$ and $c_i'$ (of $\levelCDL(T_i)$ and $\levelCDL(T'_i)$, respectively) are equal, i.e. $c_i(x)=c_i'(x)$ for all $x$. Combining this fact with \eqref{eq:label equality} and the definition of $\cdl_T$ and $\cdl_{T'}$ we obtain
  \begin{align*}
    \cdl_T(v)&=(c_i(\cdl_T(v_1)),c_i(\cdl_T(v_2)),\ldots,c_i(\cdl_T(v_k)))\\
    &=(c_i'(\cdl_{T'}(\varphi(v_1))),c_i'(\cdl_{T'}(\varphi(v_2))),\ldots,c_i'(\cdl_{T'}(\varphi(v_k))))\\
    &=\cdl_{T'}(\varphi(v)).
  \end{align*}
  Hence the principle of induction implies that $\cdl_T(v)=\cdl_{T'}(\varphi(v))$ for all $v\in V(T)$. In conclusion, $\levelCDL(T_i)=\levelCDL(T'_i)$ for all $i\in[\max\{\height(T),\height(T')\}]$.

  For the converse statement, assume $(T,r)$ and $(T',r')$ are rooted trees that satisfy the level-label condition, with common height $h$. By Lemma~\ref{lem:level-label implication} there is a collection of bijective maps $\varphi_i:T_i\to T'_i$ where $i\in [h]$, such that \[\varphi_{i+1}(\parent_T(v))=\parent_{T'}(\varphi_i(v))\text{ for all }v\in T_i\text{ and }i\in[h-1].\] Since $\{T_i\}_{i=0}^h$ is a partition of $V(T)$ (and $\{T'_i\}_{i=0}^h$ is a partition of $V(T')$) we may define a bijective map $\varphi:V(T)\to V(T')$ by putting $\varphi(v):=\varphi_{\lvl_T(v)}(v)$ for all $v\in V(T)$. 
  
  We now show that $\varphi$ is an isomorphism. It is clear that $\varphi(r)=\varphi_h(r)=r'$, as $\varphi_h:\{r\}\to\{r'\}$. Now, let $\{v,u\}$ be an edge of $T$. Clearly, $v$ and $u$ can be picked so that $u=\parent_T(v)$. Then, with $i:=\lvl_T(v)$, it follows that
  \[\{\varphi(v),\varphi(\parent_T(v))\}=\{\varphi_i(v),\varphi_{i+1}(\parent_T(v))\}=\{\varphi_i(v),\parent_{T'}(\varphi_{i}(v))\}\]
  by assumption on $\varphi_i$ and $\varphi_{i+1}$. Clearly, $\{\varphi_i(v),\parent_{T'}(\varphi_{i}(v))\}$ is an edge of $T'$. If, instead, $\{x,y\}$ is an edge of $T'$, then we may without loss of generality assume that $x\in T'_i$ for $i=\lvl_{T'}(x)$, $x=\varphi_i(v)$ for some $v\in T_i$ and $y=\parent_{T'}(x)$. Then the property of $\varphi_i$ and $\varphi_{i+1}$ implies
  \begin{align*}
    \{\varphi^{-1}(x),\varphi^{-1}(\parent_{T'}(x))\}&=\{\varphi_i^{-1}(\varphi_i(v)),\varphi_{i+1}^{-1}(\parent_{T'}(\varphi_i(v)))\}\\
  &=\{v,\varphi_{i+1}^{-1}(\varphi_{i+1}(\parent_{T}(v)))\}\\
  &=\{v,\parent_T(v)\},
  \end{align*}
  where $\{v,\parent_T(v)\}\in E(T)$ is obvious. In conclusion, we have $\{u,v\}\in E(T)$ if and only if $\{\varphi(u),\varphi(v)\}\in E(T')$, so that $\varphi$ is an isomorphism of $T$ and $T'$.
\end{proof}

\subsection{Correctness and runtime}
We present the full AHU-algorithm in Algorithm~\ref{alg:AHU}. On line~1, the trees vertices are partitioned per level; this can be done with a simple breadth-first search as noted in e.g. \cite[pp.39-41]{Golumbic1980}. In particular, the heights of the two input trees can be computed alongside the levels of respective vertex sets, so that the check on lines~2--5 becomes trivial. The lists $L$ and $L'$ used in the remainder of the AHU-algorithm carry the labels of vertices in a current level, and in a practical setting they need to contain vertex-label pairs rather than labels only. However, the underlying vertices may be considered to be a sort of \say{satellite data} which is of less concern in the mathematical analysis of the algorithm. For a more in-depth discussion about satellite data in sorting algorithms (which is equally applicable in this context), see \cite[p.147]{Cormen+2009}.

As earlier mentioned, the main part of the AHU-algorithm iterates over the levels, starting at level 0 and working upwards in both trees simultaneously. In essence, the \textbf{for}-loop of Algorithm~\ref{alg:AHU} checks that the level-label condition is satisfied for the two input-trees. The tuples $\cdl(v)$ are represented by lists. However, since multisets are difficult to compare in an algorithmic setting, Algorithm~\ref{alg:AHU} represents the multisets $\levelCDL(T_i)$ and $\levelCDL(T'_i)$ by lexicographically sorted lists instead. To be precise, these two multisets are represented by $L$ and $L'$, and they are correctly (as we will see) computed after executing lines 8--11 of Algorithm~\ref{alg:AHU}. However, in contrast to the strict mathematical definition of the compressed descendant labeling, the AHU-algorithm calculates $\cdl(v)$ assuming that the "condensed" labels of its children are known. More precisely, if $v$ is a vertex with children $v_1$, ..., $v_k$, then $\cdl(v)$ is computed from knowing the \emph{values} $c(\cdl(v_1))$, ..., $c(\cdl(v_k))$, rather than the values $\cdl(v_1)$, ..., $\cdl(v_k)$ and the label-compression $c$ of $\levelCDL(T_{\lvl(v)-1})$.  This distinction is made in the two subroutines \alg{CreateLabels} and \alg{CondenseLabels}, which we now study a bit more carefully. 

\begin{algorithm}
  \SetKwData{True}{true}\SetKwData{False}{false}\SetKwData{Level}{lvl}
  \SetKwFunction{treeheight}{height}\SetKwFunction{LEXsort}{LEXsort}
  \SetKwFunction{CondLab}{CondenseLabels}\SetKwFunction{CreateLab}{CreateLabels}
  \SetKwFunction{AddLeafs}{AddLeafLabels}
  \KwIn{Two rooted trees $T$ and $T'$}
  \KwOut{\True if $T$ and $T'$ are isomorphic, otherwise \False}
  \BlankLine
  Pre-process the trees so that $T_i$ and $T'_i$ are available\;
  \If{\treeheight{$T$} $\neq$ \treeheight{$T'$}}{
    \KwRet{\False}
  }\Else{
    $h\leftarrow$ \treeheight{$T$}\;
  }
  Initialize $L$ and $L'$ as empty lists\;
  \For{\Level $\leftarrow 0$ \KwTo $h$}{
    $L\leftarrow$ \CreateLab{$L$, $T_{\Level}$}\;
    $L'\leftarrow$ \CreateLab{$L'$, $T'_{\Level}$}\;
    $L\leftarrow$ \LEXsort{$L$}\;
    $L'\leftarrow$ \LEXsort{$L'$}\;
    \If{$L\neq L'$}{
      \KwRet{\False}
    }
    $L, L'\leftarrow$ \CondLab{$L$}\;
  }
  \KwRet{\True}

  \caption{The AHU-algorithm \alg{AHU}.}\label{alg:AHU}
\end{algorithm}

\begin{lemma}\label{lem:CreateLabels}
  Let $T$ be a rooted tree with compressed descendant labeling $\cdl$. Moreover, let $c_i$ denote the  label-compression of $\levelCDL(T_i)$ for some $i\in [\height(T)]$. If $L$ is a sorted list of integers containing the value $c_i(\cdl(v))$ for each $v\in T_{i}$, then $\alg{CreateLabels}(L,T_{i+1})$ outputs a list containing $\cdl(v)$ for each $v\in T_{i+1}$, in $O(|T_i|+|T_{i+1}|)$ time.
\end{lemma}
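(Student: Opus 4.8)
The plan is to first make explicit what $\alg{CreateLabels}(L,T_{i+1})$ does. It creates, for every $u\in T_{i+1}$, an initially empty list $B(u)$ (which is to become $\cdl(u)$), accessed in $O(1)$ time through a field of $u$; it then makes a single left-to-right pass over $L$, where the entry corresponding to a vertex $v\in T_i$ carries, as satellite data, the vertex $v$ together with the integer $c_i(\cdl(v))$, and for each such entry it appends the integer $c_i(\cdl(v))$ to the tail of $B(\parent_T(v))$; finally it returns the list of pairs $(u,B(u))$ over $u\in T_{i+1}$. Correctness then reduces to the single claim that $B(u)=\cdl(u)$ for every $u\in T_{i+1}$ once the pass has finished. (The boundary case of an empty $L$, needed when $T_{i+1}=T_0$, is subsumed: no append occurs, and every vertex of $T_0$, being a leaf, is correctly left with $B(u)=()$.)

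For the claim, I would fix $u\in T_{i+1}$. Whenever $T_{i+1}\neq\emptyset$ we have $i<\height(T)$, so every $v\in T_i$ satisfies $\parent_T(v)\in T_{i+1}$ and $\{\child_T(w)\}_{w\in T_{i+1}}$ is a partition of $T_i$; hence the entries of $L$ that get appended to $B(u)$ are exactly those with $v\in\child_T(u)$. If $u$ is a leaf there are none, so $B(u)=()=\cdl(u)$. Otherwise $u$ has children $u_1,\dots,u_k$, and $B(u)$ is the tuple of integers $c_i(\cdl(u_1)),\dots,c_i(\cdl(u_k))$ listed in the order in which their entries occur in $L$; since $L$ is sorted and any subsequence of a non-decreasing sequence is non-decreasing, this tuple is non-decreasing. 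It remains to recognise this non-decreasing tuple as $\cdl(u)$: by definition $c_i$ assigns consecutive integers to the elements of $S_i=\{\cdl(v):v\in T_i\}$ taken in $\lexleq$-order, so $c_i$ is strictly increasing, hence an order-embedding of $(S_i,\lexleq)$ into $(\N,\leq)$; consequently listing the children with $c_i(\cdl(u_1))\leq\dots\leq c_i(\cdl(u_k))$ is the same, up to permuting children that share a label (which does not change the resulting tuple), as listing them with $\cdl(u_1)\lexleq\dots\lexleq\cdl(u_k)$, and the latter is exactly the ordering used to define $\cdl(u)=(c_i(\cdl(u_1)),\dots,c_i(\cdl(u_k)))$. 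Thus $B(u)=\cdl(u)$, so the returned list contains $\cdl(v)$ for each $v\in T_{i+1}$.

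The runtime is then a matter of accounting: allocating the buffers $B(u)$ and emitting the output both iterate over $T_{i+1}$, costing $O(|T_{i+1}|)$; the single pass over $L$ performs $|L|=|T_i|$ iterations, each doing one parent lookup and one tail-append, both $O(1)$ provided each $B(u)$ is kept as a linked list with a maintained tail pointer. Summing gives $O(|T_i|+|T_{i+1}|)$. I expect the only genuinely nontrivial point to be the correctness step above — that a single pass over the already-sorted $L$ emits each child tuple in precisely the order prescribed by the definition of $\cdl$ — which is exactly where the order-embedding property of $c_i$ does the work; it is worth isolating, since the same interplay between sortedness and the labeling recurs throughout the analysis of the algorithm. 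The remaining verifications are routine, the one caveat being to make sure the list operations really are $O(1)$ in the intended implementation.
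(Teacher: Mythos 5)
Your proposal is correct and takes essentially the same route as the paper's proof: leaves get the empty label, a non-leaf $u$ receives its components in the order prescribed by the definition of $\cdl(u)$ because $L$ is sorted and $c_i$ is strictly increasing, and the runtime is one pass over $T_{i+1}$ plus one pass over $L$ with $|L|=|T_i|$. Your extra care about order-embedding, ties between equal labels, and constant-time appends and parent lookups only spells out what the paper leaves implicit.
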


\begin{proof}
  If $v\in T_{i+1}$ is a leaf, then $\cdl(v)=()$, and it is clear that $L'$ contains the vertex-label pair $(v,[])$ after executing rows~2--4 of \alg{CreateLabels} (found in Algorithm~\ref{alg:create})\footnote{Recall that the tuples are represented with lists.}.  If $u\in T_{i+1}$ is not a leaf, and thus have children $u_1$, ..., $u_k$, then, by assumption, the integer values $c_i(\cdl(u_1))$, ..., $c_i(\cdl(u_k))$ appear in non-descending order in $L$. Hence the label $\cdl(u)$ of $u$ in $L'$ will be precisely $[c_i(\cdl(u_1)),\ldots,c_i(\cdl(u_k))]$ for $\cdl(u_1)\lexleq\ldots\lexleq\cdl(u_k)$, since $c_i$ is strictly increasing and elements are added to $\cdl(u)$ in order while running rows~5--10.
  Thus $L'$ contains $\cdl(v)$ for each $v\in T_{i+1}$. For the runtime, simply note that rows~2--4 of Algorithm~\ref{alg:create} take $O(|T_{i+1}|)$ time, and rows~5--10 take $O(|L|)$ time. Since $L$ contains as many integer labels as there are vertices in $T_i$, the combined time complexity is $O(|T_i|+|T_{i+1}|)$.
\end{proof}

To reiterate, \alg{CreateLabels} take a sorted list of the integers $c(\cdl(v))$ for $v\in T_i$ and outputs an unsorted list of the tuples $\cdl(u)$ for $u\in T_{i+1}$. As we will now see, \alg{CondenseLabels} takes a sorted list of tuples $\cdl(u)$ for $u\in T_{i+1}$ and outputs a sorted list of the integers $c(\cdl(u))$, where $c$ is the label-compression of $\levelCDL(T_{i+1})$. 

\begin{algorithm}
  \SetKwData{parent}{parent}
  \KwIn{List $L$ of vertex-label pairs for all vertices in some level $i$. The labels are integers.}
  \KwIn{A set $T_{i+1}$ of all vertices in level $i+1$}
  \KwOut{List $L'$ of vertex-label pairs for all vertices in level $i+1$}
  \BlankLine
  $L'\leftarrow$ empty list\;
  \For{\textup{vertex} $v$ \textup{\textbf{in} $T_{i+1}$}}{
    \If{$v$ \textup{is a leaf}}{
    Append new vertex-label pair $(v,[])$ to $L'$\;}
  }
  \For{\textup{vertex-label pair} ($v$, $k$) \textup{\textbf{in}} $L$}{
    $\parent\leftarrow$ parent of $v$\;
    \If{\parent is a vertex \textup{\textbf{in}} $L'$}{
      Append $k$ to the label of \parent in $L'$\;
    }\Else{
      Append new vertex-label pair $(\parent,[k])$ to $L'$\;
    }
  }
  \KwRet{$L'$}
  \caption{The subroutine \alg{CreateLabels}.}\label{alg:create}
\end{algorithm}

\begin{lemma}\label{lem:CondenseLabels}
  Let $L$ be a lexicographically sorted list of elements $\cdl(v)$ for $v\in T_i$ of some underlying rooted tree $T$, where $\cdl$ is the compressed descendant labeling of $T$ and $i\in[\height(T)]$. Then $\alg{CondenseLabels}(L)$ outputs a sorted list of the integers $c(\cdl(v))$ for $v\in T_i$, where $c$ is the label-compression of $\levelCDL(T_i)$. Moreover, it does so in $O(|T_i|+ |T_{i-1}|)$ time.
\end{lemma}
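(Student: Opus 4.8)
The plan is to begin by pinning down the behaviour of \alg{CondenseLabels}: it should make a single left-to-right pass over the input list $L$, carrying an integer counter initialised to $0$; it assigns the current counter value to the first entry of $L$, and for every later entry it compares that entry (as a tuple) with the entry immediately preceding it in $L$, reusing the current counter value if the two are equal and incrementing the counter before assigning it if they differ. The output is the resulting list of integers, with the underlying vertices carried along as satellite data exactly as in \alg{CreateLabels}. Having fixed this description, the proof splits into a correctness part and a running-time part.

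For correctness I would use that $L$ is lexicographically sorted. Write $s_1\lexleq s_2\lexleq\cdots\lexleq s_k$ for the distinct tuples occurring in $L$; these are precisely the elements of the underlying set of the multiset $\levelCDL(T_i)$, each $s_j$ occupies a contiguous block of $L$ of length equal to its multiplicity in $\levelCDL(T_i)$, and the blocks occur in this order. A straightforward induction on the block index $j$ then shows that every entry of the $j$-th block is assigned the value $j-1$: the first entry of $L$ lies in block $1$ and receives $0$; consecutive entries inside a block are equal, so the counter is unchanged; and at the boundary between block $j$ and block $j+1$ the two compared tuples are distinct, so the counter is incremented from $j-1$ to $j$. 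Since $c$ is defined precisely so that $c(s_j)=j-1$, the entry of $L$ equal to $\cdl(v)$ is labelled with $c(\cdl(v))$, whence the output is the list of integers $c(\cdl(v))$ for $v\in T_i$, listed in the same order as the corresponding tuples appear in $L$. As $c$ is order-preserving on its domain, that order is non-decreasing, so the output is a sorted list carrying the correct multiplicities.

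For the running time I would note that the pass performs $|L|=|T_i|$ iterations, each costing $O(1)$ plus one lexicographic comparison of two tuples, and that a comparison of tuples of lengths $\ell$ and $\ell'$ costs $O(\ell+\ell'+1)$ while every tuple of $L$ takes part in at most two such comparisons. The key structural fact is that $\cdl(v)$ has length $|\child_T(v)|$ for every $v\in T_i$, and that the sets $\child_T(v)$, $v\in T_i$, form a partition of $T_{i-1}$ — a child of a level-$i$ vertex lies at level $i-1$, and conversely every level-$(i-1)$ vertex has its parent at level $i$. Hence $\sum_{v\in T_i}|\cdl(v)|=|T_{i-1}|$, so the total comparison cost is $O(|T_{i-1}|)$ and the whole subroutine runs in $O(|T_i|+|T_{i-1}|)$ time; the degenerate case $i=0$ is covered since then every tuple is empty and $T_{-1}=\emptyset$, so the bound reads $O(|T_0|)$. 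I expect this running-time half to be the only part demanding genuine care — in particular the identity $\sum_{v\in T_i}|\cdl(v)|=|T_{i-1}|$ together with the observation that each tuple is involved in only a bounded number of comparisons — whereas the correctness half is a routine induction mirroring the definition of label-compression.
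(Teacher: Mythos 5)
Your proposal is correct and follows essentially the same route as the paper: a single-pass argument showing that consecutive blocks of equal tuples in the sorted list receive consecutive integers starting at $0$ (hence the value $c(\cdl(v))$ for each $v\in T_i$, in sorted order), plus a runtime accounting in which the per-iteration comparison cost is charged to tuple lengths and $\sum_{v\in T_i}|\cdl(v)|=|T_{i-1}|$ since the children of level-$i$ vertices are exactly the vertices of level $i-1$. Your write-up is merely more explicit than the paper's (block induction for correctness, the \say{each tuple is involved in at most two comparisons} bookkeeping, and the $i=0$ edge case), with no substantive difference.
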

\begin{proof}
  The method \alg{CondenseLabels}, given in Algorithm~\ref{alg:condense}, clearly replaces consecutive and distinct labels in the given list $L$ by consecutive integers, starting at $0$. By assumption, the label at position $i$ of $L$ will be the tuple $\cdl(v_i)$ for $v_i\in T_i$, where the vertices $v_1,\ldots,v_k$ of $T_i$ are indexed such that $\cdl(v_1)\lexleq\ldots\lexleq\cdl(v_k)$. Hence the output $L'$ will contain the elements $c(\cdl(v_1))$, ..., $c(\cdl(v_k))$ in that order. 
  
  For the time complexity, note that the \textbf{for}-loop of Algorithm~\ref{alg:condense} runs in $|L|$ steps. In each step, the worst-case scenario would involve comparing each integer component of the current label to the previous label. With the constant-time operations of Algorithm~\ref{alg:condense} taken into account, the algorithm runs in $O(|L|+\sum_{i=1}^{k}|L_{i}|)$, where $|L_{i}|$ is the length of the $i$:th element of $L$. As $|L|=|T_i|$, and the sum of the lengths of the tuples in $L$ is the same as the total number of children of vertices in $T_i$, this yields a $O(|T_i|+|T_{i-1}|)$ runtime (where, for notational purposes, $|T_{-1}|=0$, even though there is no level $-1$).
\end{proof}

\begin{algorithm}
  \SetKwData{prevInt}{prevInt}\SetKwData{prevLabel}{prevLabel}
  \KwIn{List $L$ of vertex-label pairs}
  \KwOut{List $L'$ of vertex-label pairs, where $L'$ is ordered as $L$, and the labels are condensed}
  \BlankLine
  $L'\leftarrow$ empty list\;
  \prevInt$\leftarrow0$\;
  \prevLabel$\leftarrow$ first label of $L$\;
  \For{\textup{vertex-label pair} ($v$, $\ell$) \textup{\textbf{in}} $L$ in order}{
    \If{$\ell\neq$ \prevLabel}{
      \prevInt$\leftarrow\prevInt+1$\;
      \prevLabel$\leftarrow\ell$\;
    }
    Append vertex-label pair ($v$, \prevInt) to $L'$;
  }
  \KwRet{$L'$}
  \caption{The subroutine \alg{CondenseLabels}.}\label{alg:condense}
\end{algorithm}

\begin{theorem}
  The AHU-algorithm is correct. That is, it outputs \alg{True} if and only if the two input trees are isomorphic. Moreover, the AHU-algorithm may be implemented in a linear runtime.
\end{theorem}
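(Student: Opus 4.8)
The plan is to prove correctness and the linear runtime bound separately, both resting on the level-label characterization of Theorem~\ref{thm:characterization}. For correctness I would establish a loop invariant describing precisely the contents of $L$ and $L'$ during each pass of the \textbf{for}-loop of Algorithm~\ref{alg:AHU}, and then read off that the algorithm outputs \alg{True} exactly when the two input trees satisfy the level-label condition. For the runtime I would charge each pass $O(|T_i|+|T_{i-1}|)$ (plus the primed analogue) using Lemmas~\ref{lem:CreateLabels} and~\ref{lem:CondenseLabels} and Theorem~\ref{thm:LEXsort}, and then telescope using $\sum_i|T_i|=|V(T)|$.

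The invariant I would prove by induction on $i$ is: if the algorithm enters the iteration with loop value $i$ without having returned, then just before the test $L\neq L'$ of that iteration, $L$ is a lexicographically sorted list whose underlying multiset of labels is $\levelCDL(T_i)$, and $L'$ likewise represents $\levelCDL(T'_i)$; equivalently, just after the call $\alg{CondenseLabels}(L)$ in iteration $i$, $L$ is the sorted list of the integers $c_i(\cdl(v))$ for $v\in T_i$, where $c_i$ is the label-compression of $\levelCDL(T_i)$, and similarly for $L'$. The base case $i=0$ is immediate: with $L=L'$ empty, $\alg{CreateLabels}([\,],T_0)$ executes only the leaf-loop of Algorithm~\ref{alg:create}, and since every vertex of $T_0$ is a leaf it returns the list of labels $\cdl(v)=(\,)$, which \alg{LEXsort} then sorts. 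For the inductive step, the post-\alg{CondenseLabels} state of iteration $i-1$ is exactly the hypothesis of Lemma~\ref{lem:CreateLabels}, so after line~8 the list $L$ contains $\cdl(v)$ for every $v\in T_i$; \alg{LEXsort} sorts it (Theorem~\ref{thm:LEXsort}); and Lemma~\ref{lem:CondenseLabels} turns the sorted list into the sorted list of the $c_i(\cdl(v))$, restoring the invariant. The identical argument handles $L'$ and $T'$. Granting the invariant, correctness follows quickly: two finite multisets of tuples are equal if and only if their lexicographically sorted list representations coincide, so the test in iteration $i$ forces a \alg{False} output exactly when $\levelCDL(T_i)\neq\levelCDL(T'_i)$; together with lines~2--5, which return \alg{False} exactly when $\height(T)\neq\height(T')$ --- a case in which the level-label condition fails too --- this shows Algorithm~\ref{alg:AHU} returns \alg{True} if and only if $\levelCDL(T_i)=\levelCDL(T'_i)$ for all $i\in[h]$, i.e. if and only if $T$ and $T'$ satisfy the level-label condition, which by Theorem~\ref{thm:characterization} is equivalent to their being isomorphic.

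For the runtime, put $n=|V(T)|+|V(T')|$. Line~1 is a breadth-first search in each tree, costing $O(n)$ and also yielding the heights, so lines~2--5 add only $O(1)$. In iteration $i$: the two \alg{CreateLabels} calls cost $O(|T_i|+|T_{i-1}|)+O(|T'_i|+|T'_{i-1}|)$ by Lemma~\ref{lem:CreateLabels} (with $|T_{-1}|:=0$); each \alg{LEXsort} call processes $|T_i|$ tuples of total length $|T_{i-1}|$ with components in $[\,|T_{i-1}|\,]\subseteq[n]$, hence runs in $O(|T_i|+|T_{i-1}|)$ (resp. the primed version) by Theorem~\ref{thm:LEXsort}; the test $L\neq L'$ is done entry by entry in $O(|T_i|+|T_{i-1}|)$; and \alg{CondenseLabels} costs $O(|T_i|+|T_{i-1}|)$ by Lemma~\ref{lem:CondenseLabels}. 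Summing over $i=0,1,\ldots,h$ and using $\sum_i|T_i|=|V(T)|$ and $\sum_i|T_{i-1}|\leq|V(T)|$ (and the same for $T'$), the \textbf{for}-loop, hence the whole algorithm, runs in $O(n)$ time.

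I expect the main obstacle to be reconciling the per-call bound of Theorem~\ref{thm:LEXsort} with this telescoping: the list sorted at level $i$ may consist of very few tuples while its components range up to $|T_{i-1}|-1$, which need not be $O(|T_i|)$, so one must check that whatever additive dependence \alg{LEXsort} has on the value range is still absorbed --- the point being that this quantity is bounded by $|T_{i-1}|$ and therefore sums to $O(n)$ overall (for instance by reusing a single size-$n$ bucket array that is cleaned up in time proportional to the components actually touched). A secondary, cosmetic matter is the loose notation $L,L'\leftarrow\alg{CondenseLabels}(L)$: since $L$ and $L'$ agree on their label components at that point, the condensed integer for each position is the same, so one simply applies the condensation to each list over its own vertices, still in linear time. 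Beyond these, everything is routine bookkeeping.
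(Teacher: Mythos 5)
Your proposal is correct and follows essentially the same route as the paper's own proof: an induction establishing that after sorting (equivalently, after condensing) $L$ and $L'$ represent $\levelCDL(T_i)$ and $\levelCDL(T'_i)$ via Lemmas~\ref{lem:CreateLabels} and~\ref{lem:CondenseLabels} and Theorem~\ref{thm:LEXsort}, then reading off the level-label condition and invoking Theorem~\ref{thm:characterization}, with the runtime obtained by the same per-level charging and telescoping. Your side remarks on the \alg{LEXsort} value-range/bucket-reuse issue and on the $L,L'\leftarrow\alg{CondenseLabels}(L)$ notation are sound refinements of points the paper passes over silently.
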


\begin{proof}
Due to Theorem~\ref{thm:characterization} it suffices to show that $\alg{AHU}(T,T')$ outputs \alg{False} if and only if the rooted trees $(T,r)$ and $(T',r')$ fail the level-label condition. Note that $\alg{AHU}(T,T')$ correctly outputs \alg{False} (c.f. lines~2--3 of Alg.~\ref{alg:AHU}) if $\height(T)\neq\height(T')$. We may thus assume $\height(T)=\height(T')=:h$. 

For each $i\in[h]$, let $L_{(i)}$ denote the list stored in the variable $L$ after line~10 has been executed in iteration $i+1$ of the \textbf{for}-loop of Algorithm~\ref{alg:AHU}, i.e. in the iteration where the variable \alg{lvl} takes the value $i$. We show that $L_{(i)}$ is an ordered representation of the multiset $\levelCDL(T_i)$, by induction on $i$. The base case of $i=0$ is easily verified. 

Now, assume there is some $j\in[h-1]$ such that $L_{(j)}$ contains every element of $\levelCDL(T_j)$ in lexicographical order, and consider $L_{(j+1)}$. By assumption, Lemma~\ref{lem:CondenseLabels} may be applied and we conclude that after $\alg{CondenseLabels}(L_{(j)})$ on line~12 is executed, $L$ stores a sorted list $L_{(j)'}$ of integers. In particular, the $k$:th element of $L_{(j)'}$ equals $c(\cdl(v))$, where $\cdl(v)$ is the $k$:th element of $L_{(j)}$, for $k=1,2,\ldots,|L_{(j)}|=|L_{(j)'}|$. Thus, the assumptions of Lemma~\ref{lem:CreateLabels} are satisfied, and $L$ stores a list $L_{(j+1)'}$ of the labels $\cdl(v)$ for each $v\in T_{j+1}$ after executing $\alg{CreateLabels}(L_{(j)'})$ on line~8. By Theorem~\ref{thm:LEXsort}, $L_{(j+1)}$ is a lexicographically sorted representation of $\levelCDL(T_{j+1})$, after executing line~10. The principle of induction implies that $L_{(i)}$ represents $\levelCDL(T_i)$ for all $i\in[h]$. 

By analogous arguments the list $L'_{(i)}$ stored in $L'$ after running line~11 of iteration $i+1$ represents $\levelCDL(T'_i)$ for all $i\in [h]$. Since line~13 is executed if and only if $L_{(i)}\neq L'_{(i)}$ for some $i\in [h]$, it is executed if and only if $\levelCDL(T_i)\neq\levelCDL(T'_i)$ for some $i\in[h]$. That is, $\alg{AHU}(T,T')$ outputs \alg{False} for two trees of equal height $h$ if and only if $T$ and $T'$ does not satisfy the level-label condition.

For the runtime, suppose $T$ and $T'$ has $n$ respectively $n'$ vertices. If the heights of $T$ and $T'$ are different, then Algorithm~\ref{alg:AHU} terminates after line~3. Up to that point, the runtime is dominated by the breadth-first search of line~1, which by e.g. \cite{Golumbic1980} take $O(n)$ time for $T$ and $O(n')$ time for $T'$. The total runtime is thus, in this case, linear.

If $\height(T)=\height(T')=:h$, then let $n_i:=|T_i|$ and $n'_i:=|T'_i|$ for each $i\in[h]$, so that $\sum_{i=0}^{h}n_i=n$ and $\sum_{i=0}^{h}n'_i=n'$. For notational purposes, put $n_{-1}:=n'_{-1}:=0$. By Lemma~\ref{lem:CreateLabels} and Lemma~\ref{lem:CondenseLabels} the calls on lines~8,9, and 14 take $O(n_i+n_{i-1})$ respectively $O(n'_i+n_{i-1})$ time during the $(i+1)$:th iteration of the \textbf{for}-loop. Note that $L_{(i)}$ contains $n_i$ labels, and that each label $\cdl(v)$ in $L_{(i)}$ for $v\in T_i$ contains as many components as the vertex $v$ has children. Moreover, the total number of children of all vertices in $T_i$ equals $|T_{i-1}|$, i.e, equals $n_{i-1}$. Together with Theorem~\ref{thm:LEXsort}, this means that line~10 during iteration $i+1$ executes in $O(n_i+n_{i-1})$ time. Similarly, line~11 executes in  $O(n'_i+n'_{i-1})$ time. In total, each iteration of the \textbf{for}-loop takes $O(n_i+n'_i+n_{i+1}+n'_{i+1})$ time, and summing over the number of iterations (i.e. the number of levels), we obtain a total running time of
\[\sum_{i=0}^{h}O(n_i+n'_i+n_{i+1}+n'_{i+1})=O(n+n'),\]
which is linear.
\end{proof}

\section{Final remarks}

An important remark about our run time result is that we have made a hidden assumption, namely that we will not, so to speak, run out of integers while running \alg{CondenseLabels}. For example, on a 64-bit computer, we may only represent integers up to $2^{64}-1$. Technically speaking, we have thus assumed that all considered rooted trees have an order bounded by, for example, $2^{64}$, or equivalently, that $\log_2(|V(T)|)=O(1)$ for all considered trees $T$. On the bright side, trees with more than $2^{64}>10^{18}$ vertices are rare in practice. Moreover, \cite{Campbell1991} discusses how the AHU-algorithm can be adapted to (in theory) work with arbitrarily large rooted trees of order $n$, and get a runtime of $O(n\log_2(n))$.

Secondly, it is worth mentioning that the AHU-algorithm can be used to detect isomorphisms of unrooted trees as well. The only additional work is that we need to decide on a canonical root for both input trees, and then run the AHU-algorithm on the rooted versions of the input. The best choice of root seems to be the so-called center (consisting of the vertex or the two vertices that lies on every path of maximal length). Not much has been written about this, but see the lecture notes \cite{Suderman2002} or the implementation in the Python package NetworkX \cite{Hagberg+2008} for additional hints.

\printbibliography

\end{document}